\theoremstyle{plain}
\newtheorem{thm}{\protect\theoremname}
\newtheorem{proposition}{Proposition}
\providecommand{\theoremname}{Theorem}
\begin{document}
\title{Robustness of quantum data hiding against entangled catalysts and memory}

\author{Aby Philip}
\author{Alexander Streltsov}
\email{streltsov.physics@gmail.com}
\affiliation{Institute of Fundamental Technological Research, Polish Academy of Sciences, Pawi\'{n}skiego 5B, 02-106 Warsaw, Poland}

\begin{abstract}
Quantum data hiding stores classical information in bipartite quantum states that are, in principle, perfectly distinguishable, yet remain almost indistinguishable without access to a quantum communication channel. Here, we investigate whether this limitation can be overcome when the communicating parties are assisted by additional quantum resources. We develop a general framework for state discrimination that unifies catalytic and memory-assisted local discrimination protocols and analyze their power to reveal hidden information. We prove that when the hiding states are separable, neither entangled catalysts nor quantum memory can increase the optimal discrimination probability, establishing the robustness of separable data-hiding schemes. In contrast, for some entangled states, a reusable quantum memory turns locally indistinguishable states into ones that can be discriminated almost perfectly. Our results delineate the fundamental limits of catalytic and memory-assisted state discrimination and identify separable encodings as a robust strategy for quantum data hiding. 
\end{abstract}

\maketitle

\section{Introduction}

Quantum entanglement, the hallmark nonclassical correlation of quantum mechanics~\cite{Horodecki_2009}, is a key resource underlying much of quantum information science~\cite{PhysRevLett.70.1895,PhysRevLett.69.2881,Shor_1997,PhysRevLett.67.661}. 
Shared entanglement enables protocols with no classical analog, such as quantum teleportation~\cite{PhysRevLett.70.1895} and quantum super-dense coding~\cite{PhysRevLett.69.2881}, and enhances the performance of tasks in computation~\cite{Shor_1997} and cryptography~\cite{PhysRevLett.67.661}. 
However, not all genuinely quantum phenomena rely on entanglement, and in some cases, excessive entanglement may even hinder information processing. 
Notable examples include \emph{nonlocality without entanglement}, where non-entangled states exhibit intrinsically nonclassical behavior~\cite{PhysRevA.59.1070}, and the observation that certain highly entangled states are too entangled to serve as useful resources for quantum computation~\cite{PhysRevLett.102.190501}.

A foundational problem, dating to the earliest days of quantum information science, is local state discrimination~\cite{PhysRevLett.66.1119}: given a set of states shared between distant parties, Alice and Bob, can they identify which state they hold using only local operations and classical communication (LOCC)? For two orthogonal pure states, the answer is always yes~\cite{PhysRevLett.85.4972}. However, when considering larger and more general sets of states, striking effects emerge—most notably, the aforementioned phenomenon of nonlocality without entanglement, where orthogonal product states cannot be perfectly distinguished by LOCC alone~\cite{PhysRevA.59.1070}. 
This phenomenon has led to the discovery of families of bipartite~\cite{Berry_Groisman_2001,Yu_2012,yang2015characterizing,xu2016locally} and multipartite~\cite{divincenzo2003unextendible,4957660,PhysRevA.74.052103,PhysRevA.88.024301,PhysRevA.98.022303,PhysRevA.93.032341,PhysRevA.95.052344,PhysRevA.105.032407} orthogonal states that remain locally indistinguishable. Such phenomena are crucial because they point out an operational gap between local and general quantum measurements. An important application is quantum data hiding~\cite{PhysRevLett.86.5807,PhysRevLett.89.097905,DiVincenzo_2002,Hayden_2004,Aubrun_2015}, where classical information is encoded into bipartite states that are perfectly distinguishable in principle, yet almost indistinguishable for any LOCC procedure, thereby furnishing an information-theoretic primitive for secret sharing.

Entanglement catalysis offers a way to overcome some limits of local state discrimination~\cite{Yu_2012,PhysRevA.105.032407}. In this setting, Alice and Bob may borrow an ancillary entangled state, a catalyst, that can interact with their systems during the protocol but must return exactly to its initial state~\cite{PhysRevLett.83.3566}. Remarkably, quantum catalysts can activate local distinguishability: there exist entangled states that are not perfectly distinguishable by LOCC alone yet become perfectly distinguishable in the presence of a suitable catalyst~\cite{Yu_2012,PhysRevA.105.032407}. Beyond discrimination, catalysis enlarges the scope of LOCC state transformations~\cite{PhysRevLett.83.3566,Neven_2021,PhysRevLett.127.150503,PhysRevLett.127.080502} and has been investigated across quantum thermodynamics~\cite{PhysRevLett.126.150502,PhysRevLett.132.200201,PhysRevLett.132.180202} and other quantum resource theories~\cite{Datta_2023,RevModPhys.96.025005}. 

Another quantum resource considered useful for information processing is the \emph{quantum memory}, which has been shown to be more powerful than its classical counterpart~\cite{Konig_2005,Berta_2010}. 
The use of quantum memory has been studied in the context of quantum networks~\cite{PhysRevA.80.022339} and quantum channel discrimination~\cite{10.1145/1250790.1250873,PhysRevA.81.032339,PhysRevLett.101.180501}.
In the setting considered here, a quantum memory refers to an auxiliary quantum system shared between Alice and Bob that can interact with their systems during the protocol. 
Unlike a catalyst, however, the quantum memory is not required to return to its initial state and can instead be reused in subsequent rounds of the protocol.

In this work, we investigate the role of entanglement catalysis and quantum memory in local state discrimination and data hiding. We prove that for any pair of separable states, access to either a catalyst or a quantum memory does not enhance the optimal discrimination probability achievable by LOCC. 
We further show that certain data hiding schemes, which are secure under standard LOCC protocols, become vulnerable once the parties are equipped with a reusable quantum memory: in this setting, memory assistance enables local discrimination with success probability arbitrarily close to unity. These results delineate when catalytic or memory-assisted protocols can and cannot overcome the fundamental limitations of quantum data hiding.

\section{Local state discrimination with entangled catalysts and memory}
Quantum state discrimination can be viewed as a game in which a referee prepares one of two possible quantum states, \(\{\rho_0, \rho_1\}\), and sends it to an agent whose objective is to determine which state was prepared by performing an appropriate quantum measurement on the received system~\cite{helstrom1969quantum,HELSTROM1967254}. If the two states are prepared with equal prior probability, the maximal success probability achievable by the agent in this task is given by~\cite{helstrom1969quantum,HELSTROM1967254}
\begin{equation} \label{eq:Popt}
P_{\mathrm{opt}}(\rho_{0}, \rho_{1})
= \frac{1}{2} + \frac{1}{4} \bigl\Vert \rho_{0} - \rho_{1} \bigr\Vert_{1},
\end{equation}
where \(\Vert M \Vert_{1} = \mathrm{Tr}\!\sqrt{M^{\dagger} M}\) is the trace norm.

In the previous setting, the agent had access to all quantum measurements allowed by quantum mechanics. 
An important variation of this scenario involves two spatially separated agents, Alice and Bob, who are then only allowed to implement operations via LOCC. We shall refer to this scenario as \emph{local state discrimination}. In this case, the referee prepares one of two bipartite quantum states, \(\rho_0^{AB}\) or \(\rho_1^{AB}\), and the subsystems $A$ and $B$ are given to Alice and Bob, respectively. 
When these states are prepared with equal prior probability, the optimal success probability for Alice and Bob to distinguish them with LOCC is given by~\cite{Matthews_2009}
\begin{equation} \label{eq:Plocc}
P_{\mathrm{LOCC}}(\rho_{0}, \rho_{1})
= \frac{1}{2} + \frac{1}{4} \bigl\Vert \rho_{0} - \rho_{1} \bigr\Vert_{\mathrm{LOCC}}.
\end{equation}
Here \(\Vert \cdot \Vert_{\mathrm{LOCC}}\) is the LOCC norm, we refer to the Methods section for a formal definition and more details. 

Quantum data hiding is a surprising phenomenon, implying the existence of state pairs $\rho_0$ and $\rho_1$ which are perfectly distinguishable in principle, but when distributed to two spatially separated parties, they become almost indistinguishable via LOCC~\cite{PhysRevLett.86.5807,PhysRevLett.89.097905,DiVincenzo_2002,Hayden_2004,Aubrun_2015}. Specifically, for any \(\varepsilon > 0\), there exist bipartite quantum states \(\rho_0^{AB}\) and \(\rho_1^{AB}\) such that~\cite{PhysRevLett.89.097905,Hayden_2004,Aubrun_2015,Ha_2025,mele2025optimisingquantumdatahiding}
\begin{align}
P_{\mathrm{opt}}(\rho_{0}, \rho_{1}) &= 1, \\
P_{\mathrm{LOCC}}(\rho_{0}, \rho_{1}) &< \frac{1}{2} + \varepsilon. \label{eq:DataHiding}
\end{align}
Although these states can be perfectly distinguished by a global measurement or, equivalently, when Alice and Bob have access to a quantum communication channel, they remain almost indistinguishable when restricted to LOCC.

In this work, we investigate local state discrimination and data hiding under more general strategies. 
A natural extension of the standard setting is to allow Alice and Bob to employ \emph{entangled catalysts}. 
In this scenario, Alice and Bob have access to an additional ancillary system \(A'B'\), referred to as the catalyst, which must be returned unchanged at the end of the process~\cite{PhysRevLett.83.3566}. 
Specifically, the goal is to find a quantum state of the catalyst \(\tau^{A'B'}\) and an LOCC protocol \(\Lambda_{\mathrm{LOCC}}\) such that
\begin{align}
\Lambda_{\mathrm{LOCC}}\!\left(\rho_{i}^{AB} \otimes \tau^{A'B'}\right) &= \sigma_{i}^{AB} \otimes \tau^{A'B'}, \label{eq:Catalyst-1}\\
P_{\mathrm{LOCC}}\!\left(\sigma_{0}^{AB}, \sigma_{1}^{AB}\right) &> P_{\mathrm{LOCC}}\!\left(\rho_{0}^{AB}, \rho_{1}^{AB}\right).\label{eq:Catalyst-2}
\end{align}
In other words, the states \(\rho_0^{AB}\) and \(\rho_1^{AB}\) are catalytically transformed into \(\sigma_0^{AB}\) and \(\sigma_1^{AB}\), respectively, such that the new pair can be better distinguished via LOCC, while the catalyst state \(\tau^{A'B'}\) remains unchanged. While catalytic transformations are known to enhance certain local state discrimination protocols~\cite{Yu_2012,PhysRevA.105.032407}, their potential impact on quantum data hiding remains largely unexplored.

To address this question, we introduce a general state discrimination framework that unifies all scenarios considered in this work, including the catalytic case. For this, let \( Z_j \) be an independent identically distributed (i.i.d.)\ random variable taking values in \(\{0,1\}\) uniformly at random. 
In the \(j\)-th round of the discrimination procedure, the state to be distinguished is given by \(\rho_{Z_j}^{AB}\). 
Let \(Y_j \in \{0,1\}\) denote the outcome of Alice and Bob's guess in the \(j\)-th round. We further define the variable \(X_j\) to represent whether the guess in round \(j\) is correct:
\begin{equation} \label{eq:X}
X_j = 
\begin{cases}
1, & \text{if } Y_j = Z_j, \\
0, & \text{otherwise.}
\end{cases}
\end{equation}
The total number of correct guesses after \(n\) rounds is then given by
\begin{equation} \label{eq:S}
S_n = \sum_{j=1}^n X_j.
\end{equation}
These definitions are completely general and can, in principle, be applied to any state discrimination procedure.

\begin{figure}
\includegraphics[width=0.9\columnwidth]{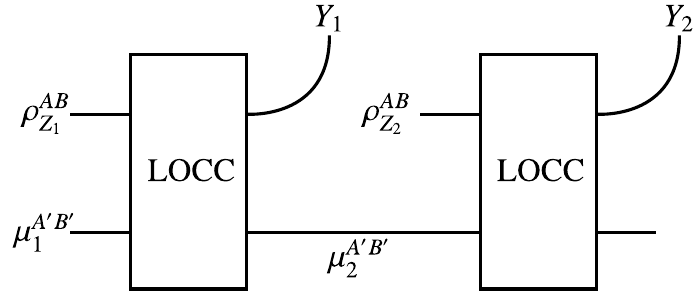}

\caption{\label{fig:1}\textbf{Local state discrimination with entanglement catalysis and quantum memory.} In each round, one of two quantum states \(\rho_{Z_1}^{AB}\) is sent to Alice and Bob, where \(Z_1\) is a random variable taking the values \(0\) or \(1\) with equal probability. 
In addition, Alice and Bob share a quantum memory \(A'B'\), initialized in the state \(\mu_1^{A'B'}\). 
They attempt to infer the value of \(Z_1\) by performing an LOCC protocol on the joint system \(\rho_{Z_1}^{AB} \otimes \mu_1^{A'B'}\) and recording their guess as \(Y_1 \in \{0,1\}\). 
In the next round, the updated memory state \(\mu_2^{A'B'}\) is reused to guess \(Z_2\), and the procedure continues iteratively. 
In the catalytic setting, the memory state remains unchanged throughout the process, that is, \(\mu_1^{A'B'} = \mu_2^{A'B'} = \mu_j^{A'B'}\) for all rounds \(j\). The figure shows the first two rounds of the process.
}

\end{figure}

Equipped with these tools, we can make precise the notion of an \emph{achievable success rate} for a general discrimination procedure. 
We say that the rate \(r \in [0,1]\) is achievable if, for every \(\varepsilon > 0\) and \(m > 0\), there exists \(n \ge m\) such that 
\begin{equation} \label{eq:P}
\operatorname{Pr}\!\left( S_n \ge r n \right) \ge 1 - \varepsilon .
\end{equation}
This definition means that there exist infinitely many values of $n$ for which, with probability arbitrarily close to one, the empirical success fraction \(S_n/n\) is no smaller than $r$.
Equivalently, there exists an unbounded and monotonically increasing integer sequence $\{n_k\}$ such that $\lim_{k\rightarrow\infty}\operatorname{Pr}\left(S_{n_{k}}\geq rn_{k}\right)=1$. The \emph{optimal success rate} $R$ is then obtained by taking supremum over all achievable rates $r$. 

The general framework introduced above naturally encompasses both the standard state discrimination setting discussed at the beginning of this section and local state discrimination discussed throughout this work. 
In the standard scenario, one recovers Eq.~(\ref{eq:Popt}) as the expression for the optimal success rate, while applying the framework to the LOCC setting yields \(P_{\mathrm{LOCC}}\) as defined in Eq.~(\ref{eq:Plocc}).

We will apply the general framework to investigate local state discrimination and data hiding in the catalytic setting, and further extend our analysis to more general transformation protocols by relaxing the catalytic constraint to include the use of a \emph{quantum memory}. In this scenario, the ancillary system \(A'B'\), which now serves as the quantum memory, may evolve and take different states throughout the process, see Fig.~\ref{fig:1}.
The total initial state is given by \(\rho_{Z_1}^{AB} \otimes \mu_1^{A'B'}\), where \(\mu_1^{A'B'}\) denotes the initial memory state and \(Z_1 \in \{0,1\}\) labels the prepared input. 
Alice and Bob apply an LOCC protocol to this composite system and, using the resulting classical data, produce a guess \(Y_1\). 
The final memory state, denoted \(\mu_2^{A'B'}\), depends on the specific protocol and on the corresponding measurement outcomes and, in general, does not coincide with the initial state \(\mu_1^{A'B'}\). In the second round, the updated memory state \(\mu_2^{A'B'}\) is reused: Alice and Bob apply an LOCC protocol to the state $\rho_{Z_2}^{AB} \otimes \mu_2^{A'B'}$ to produce the next guess \(Y_2\). 
Proceeding iteratively, the quantum memory is updated round by round and fed back into the protocol. Note that, in this setting, the sequence \(\{Y_j\}\) is, in general, not i.i.d.

With the main concepts in place, we present our key results in the following section.

\section{Main results}

The central question of this work is whether entangled catalysts or quantum memory can overcome data hiding; 
that is, whether there exist states \(\rho_0\) and \(\rho_1\) satisfying Eq.~(\ref{eq:DataHiding}) that nevertheless become perfectly distinguishable under LOCC when assisted by an entangled catalyst returned unchanged, or a reusable quantum memory. 

This question is particularly relevant from a security perspective, especially when data hiding is employed to protect information. 
If Alice and Bob can recover the hidden data with the aid of a catalyst, the encoding scheme can no longer be regarded as secure, since the parties are able to reveal the information without consuming any entanglement in the process.

The next theorem settles this question for the case where \(\rho_0\) and \(\rho_1\) are separable. Here, \(R_{\mathrm{c}}\) and \(R_{\mathrm{m}}\) denote the optimal success rates achievable with, respectively, an entangled catalyst and a quantum memory.

\begin{thm} \label{thm:Separable}
    For separable data hiding states $\rho_{0,1}$ neither quantum catalysis nor quantum memory can increase the optimal success probability:
    \begin{equation}
    R_{\mathrm{m}}(\rho_{0},\rho_{1})=R_{\mathrm{c}}(\rho_{0},\rho_{1})=P_{\mathrm{LOCC}}(\rho_{0},\rho_{1}).
    \end{equation}
\end{thm}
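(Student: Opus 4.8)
The plan is to show two inequalities: $R_{\mathrm{m}}(\rho_0,\rho_1)\ge R_{\mathrm{c}}(\rho_0,\rho_1)\ge P_{\mathrm{LOCC}}(\rho_0,\rho_1)$, which is immediate since a catalyst is a special kind of memory and the memoryless LOCC protocol is a trivial catalytic protocol, and then the hard converse $R_{\mathrm{m}}(\rho_0,\rho_1)\le P_{\mathrm{LOCC}}(\rho_0,\rho_1)$ for separable $\rho_{0,1}$. The starting observation is that in round $j$ the state actually discriminated is $\rho_{Z_j}^{AB}\otimes\mu_j^{A'B'}$, and by Eq.~(\ref{eq:Plocc}) the per-round conditional success probability, given the memory state $\mu_j$ (which depends only on the past outcomes, not on $Z_j$), is at most $\tfrac12+\tfrac14\|\rho_0^{AB}\otimes\mu_j^{A'B'}-\rho_1^{AB}\otimes\mu_j^{A'B'}\|_{\mathrm{LOCC}}$. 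So the key lemma I need is that for \emph{separable} $\rho_{0,1}$ and \emph{any} state $\mu^{A'B'}$ (entangled or not) on the ancilla,
\begin{equation}
\bigl\|(\rho_0^{AB}-\rho_1^{AB})\otimes\mu^{A'B'}\bigr\|_{\mathrm{LOCC}_{AA':BB'}}=\bigl\|\rho_0^{AB}-\rho_1^{AB}\bigr\|_{\mathrm{LOCC}_{A:B}},
\end{equation}
i.e.\ tensoring the separable difference operator with an arbitrary ancilla state across the cut $AA'\!:\!BB'$ does not increase the LOCC-distinguishability beyond what $A\!:\!B$ LOCC already achieves. The ``$\ge$'' direction is trivial (ignore the ancilla). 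For ``$\le$'': write $\rho_i=\sum_k p^{(i)}_k\,\alpha^{(i)}_k\otimes\beta^{(i)}_k$; the difference $\rho_0-\rho_1$ is a Hermitian operator lying in the span of product operators, and I want to argue that the best LOCC measurement on $AA'\!:\!BB'$ to distinguish $\rho_0\otimes\mu$ from $\rho_1\otimes\mu$ can be simulated by an LOCC measurement on $A\!:\!B$ alone. The cleanest route is the dual/semidefinite characterization of the LOCC norm: $\|X\|_{\mathrm{LOCC}}=\max\{\operatorname{Tr}(MX): 0\le M\le \mathbb{1},\ M\in\mathrm{co}(\text{PPT or separable POVM elements})\}$ — using that LOCC-implementable two-outcome measurements are contained in the separable (and PPT) measurements, for which the norm is computed by an SDP (cf.\ Matthews–Wehner–Winter). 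Since $\mu$ is a fixed state, $\operatorname{Tr}_{A'B'}[(\mathbb{1}_{AB}\otimes\mu^{-?})\cdots]$ — better: for any separable operator $M^{AA'BB'}$ with $0\le M\le\mathbb 1$, the partial expectation $\tilde M^{AB}:=\operatorname{Tr}_{A'B'}[(\mathbb{1}_{AB}\otimes\mu^{A'B'})\,M^{AA'BB'}]$ satisfies $0\le\tilde M\le\mathbb 1$ and is again separable across $A\!:\!B$ (a partial trace against a \emph{positive} operator of a separable operator stays separable), and $\operatorname{Tr}[M\,(\rho_i\otimes\mu)]=\operatorname{Tr}[\tilde M\,\rho_i]$; hence any separable (in particular LOCC) distinguishing strategy on the big system descends to a separable one on $A\!:\!B$ with the same bias, giving $\|(\rho_0-\rho_1)\otimes\mu\|_{\mathrm{SEP}}\le\|\rho_0-\rho_1\|_{\mathrm{SEP}}$, and one then has to bridge LOCC and SEP norms using that $\rho_0-\rho_1$ is a difference of separable states, for which (by the results quoted via Eq.~(\ref{eq:Plocc}) and the data-hiding literature) the LOCC and separable distinguishability coincide — or, more safely, prove the $\le$ directly at the LOCC level by simulating Alice's and Bob's local ancilla registers: since $\mu$ is uncorrelated with the labeled system, Alice (resp.\ Bob) can locally prepare her (his) share of $\mu$ and run the big protocol internally, which is a valid $A\!:\!B$ LOCC protocol — this \emph{local-simulation} argument works because $\rho_{Z_j}^{AB}$ is itself unentangled only matters for... actually here separability of $\rho_i$ is \emph{not} needed for the per-round bound if $\mu$ is separable, so the separability hypothesis must be doing its work precisely in controlling \emph{entangled} $\mu$.

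Therefore the real content is: \emph{an entangled ancilla $\mu^{A'B'}$ does not help distinguish a pair of separable states}. This I would prove via the SDP/separable-POVM characterization: $\|(\rho_0-\rho_1)\otimes\mu\|_{\mathrm{LOCC}_{AA':BB'}}\le\|(\rho_0-\rho_1)\otimes\mu\|_{\mathrm{SEP}_{AA':BB'}}$, and then use the pinching/partial-expectation map $\Phi(M)=\operatorname{Tr}_{A'B'}[(\mathbb 1\otimes\mu)M]$ which is completely positive, unital-bounded ($\Phi(\mathbb 1)=\mathbb 1$), maps $\mathrm{SEP}_{AA':BB'}$-POVM elements to $\mathrm{SEP}_{A:B}$-POVM elements, and satisfies $\operatorname{Tr}[\Phi(M)\rho_i]=\operatorname{Tr}[M(\rho_i\otimes\mu)]$, to conclude $\|(\rho_0-\rho_1)\otimes\mu\|_{\mathrm{SEP}_{AA':BB'}}\le\|\rho_0-\rho_1\|_{\mathrm{SEP}_{A:B}}$. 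Finally invoke that for a difference of two \emph{separable} states the SEP-norm and the LOCC-norm agree — this is the place where ``separable $\rho_{0,1}$'' is essential and where I expect to either cite an existing result or supply a short argument (e.g.\ Walgate–Hardy-type local implementability of the optimal separable measurement when the operator to be distinguished is separable). Combining: the conditional per-round success is bounded by $P_{\mathrm{LOCC}}(\rho_0,\rho_1)$ regardless of $\mu_j$.

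With the per-round bound in hand, the concentration step is routine: $X_j$ is a Bernoulli-type indicator with conditional mean $\le P_{\mathrm{LOCC}}$ given the history $\mathcal F_{j-1}$, so $S_n-\sum_{j}\mathbb E[X_j\mid\mathcal F_{j-1}]$ is a bounded martingale and, by the Azuma–Hoeffding inequality, $\operatorname{Pr}(S_n\ge (P_{\mathrm{LOCC}}+\delta)n)\le e^{-2\delta^2 n}\to0$ for every $\delta>0$; hence no rate $r>P_{\mathrm{LOCC}}$ is achievable, i.e.\ $R_{\mathrm m}\le P_{\mathrm{LOCC}}$. Together with the trivial $\ge$, all three quantities coincide. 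The main obstacle is the LOCC-versus-SEP norm gap: for general operators these differ (that gap is exactly the data-hiding phenomenon), so the argument must genuinely exploit that $\rho_0-\rho_1$ is a difference of separable states — I would handle it either by the local-simulation reduction (Alice and Bob each hold the purification/classical description of their halves of $\mu$ and of the separable decompositions, reducing the entangled-ancilla LOCC protocol to an $A\!:\!B$ LOCC protocol) or by citing the known coincidence of the two norms on such operators; getting this reduction airtight for an \emph{entangled} $\mu$ is the crux.
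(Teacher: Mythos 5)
Your proposal rests on a per-round bound: that for separable $\rho_{0,1}$ and \emph{any} ancilla state $\mu^{A'B'}$, $\Vert(\rho_0-\rho_1)\otimes\mu\Vert_{\mathrm{LOCC}_{AA':BB'}}=\Vert\rho_0-\rho_1\Vert_{\mathrm{LOCC}_{A:B}}$. This lemma is false, and with it the whole strategy collapses. Take $\mu$ to be a maximally entangled state of dimension $d_A$: Alice teleports her share to Bob, who performs the global Helstrom measurement, achieving $\tfrac12+\tfrac14\Vert\rho_0-\rho_1\Vert_1$. For separable data-hiding states (e.g.\ the Eggeling--Werner construction cited in the paper) this strictly exceeds $P_{\mathrm{LOCC}}(\rho_0,\rho_1)$, so an entangled ancilla \emph{does} help in a single round. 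The theorem is nevertheless true because that single-round advantage cannot be sustained: teleportation consumes the entanglement, so the ancilla cannot be returned unchanged (catalysis) or reused indefinitely at fixed dimension (memory). Any proof must therefore be global over the rounds rather than round-by-round; no correct per-round bound of the kind you seek exists. Your two fallback routes for the lemma also fail on their own terms: the ``local simulation'' of $\mu$ is impossible precisely when $\mu$ is entangled, and the claimed coincidence of the LOCC and SEP norms on differences of separable states is not an available result (nonlocality without entanglement already shows LOCC can be strictly weaker than separable measurements on product-state ensembles).

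The paper's argument is structured entirely differently and is worth internalizing. It assumes a sustainable advantage $p_\tau>p_{\mathrm{LOCC}}$ and shows Alice and Bob could then use the repeated discrimination statistics to decide, from a \emph{single copy}, whether the ancilla was initialized in the intended entangled state $\tau$ or in a non-orthogonal separable state $\gamma$. If it is $\tau$, the catalytic (or rate) assumption makes the rounds i.i.d.\ (or eventually typical) with success fraction concentrating near $p_\tau$ by Hoeffding. If it is $\gamma$, separability of \emph{both} $\rho_{0,1}$ and the ancilla is preserved by LOCC in every round --- this is the only place separability of $\rho_{0,1}$ enters, and only the easy direction (a separable ancilla is locally preparable, hence free) is needed --- so the success fraction is a supermartingale bounded by $p_{\mathrm{LOCC}}$ and concentrates below $p_{\mathrm{LOCC}}+\delta$ by Azuma. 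Since $p_\tau>p_{\mathrm{LOCC}}$, the two statistics separate, yielding discrimination of the non-orthogonal pair $\tau,\gamma$ with probability exceeding the Helstrom bound of Eq.~(\ref{eq:Popt}) --- a contradiction. Your final concentration step (Azuma on the centered sum) is essentially the paper's Case 2, but it only becomes usable once the contradiction framework replaces the false per-round lemma.
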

\noindent We outline the main idea of the proof for the catalytic setting here, while a detailed derivation, including the case of a quantum memory, is provided in the Methods section. 

The proof proceeds by contradiction. 
Assume that there exists a catalytic procedure such that \(R_{\mathrm{c}}(\rho_{0},\rho_{1}) > P_{\mathrm{LOCC}}(\rho_{0},\rho_{1})\). 
Let \(\{X_j^{\mathrm{c}}\}\) denote the i.i.d. sequence of variables labeling the correct guesses made by Alice and Bob with the aid of the catalyst (see also Eq.~(\ref{eq:X})).
Analogously to Eq.~(\ref{eq:S}), we define $S_n^{\mathrm{c}} = \sum_{j=1}^n X_j^{\mathrm{c}}$ to be the number of correct guesses after $n$ iterations. Since, by assumption, the catalytic procedure outperforms the one without catalysts, Hoeffding’s inequality implies that for some \(q > P_{\mathrm{LOCC}}(\rho_{0}, \rho_{1})\), the following bound holds for all \(\delta > 0\) and all \(n\):
\begin{equation}
\operatorname{Pr}\!\left(\left| \frac{S_n^{\mathrm{c}}}{n} - q \right| \le \delta \right) 
\ge 1 - 2 e^{-2 n \delta^2}.
\end{equation}
In other words, if the catalyst indeed provides an advantage, then, with probability arbitrarily close to one, the empirical success fraction \(S_n^{\mathrm{c}}/n\) must exceed \(P_{\mathrm{LOCC}}\) for sufficiently large~\(n\).

An important consequence is that Alice and Bob could exploit such an advantage to probe the state of the catalyst itself. 
Since, by assumption, the states \(\rho_0\) and \(\rho_1\) are separable, each round of the discrimination procedure can also be regarded as an LOCC protocol acting on the catalyst, producing a sequence of classical outcomes \(\{Y_j\}\). 
Crucially, the collected data \(\{Y_j\}\) could then be used to perfectly determine whether the catalyst was initialized in the intended entangled state \(\tau^{A'B'}\) or in some separable state \(\nu_{\mathrm{sep}}^{A'B'}\).
Since Alice and Bob are provided with only one copy of the catalyst, this would contradict the fundamental bound on quantum state discrimination given in Eq.~(\ref{eq:Popt}). 
By the same reasoning, the argument extends to the more general scenario in which Alice and Bob employ a reusable quantum memory.

It is known that quantum catalysts can, in general, enhance the success probability of local state discrimination~\cite{Yu_2012,PhysRevA.105.032407}. 
However, Theorem~\ref{thm:Separable} shows that no such enhancement is  possible when the states involved are separable. 
These observations lead to the question: when, in general, can catalysts or quantum memory overcome quantum data hiding. Specifically, when can Alice and Bob exploit catalysts or quantum memory to perfectly distinguish two states that are otherwise almost indistinguishable under LOCC. 
The following theorem provides an answer to this question for the case of a reusable quantum memory.

\begin{thm} \label{thm:2}
For every $\varepsilon,\delta > 0$ there exist data hiding states $\rho_0$, $\rho_1$ such that 
\begin{align}
P_{\mathrm{LOCC}}(\rho_{0},\rho_{1}) & <\frac{1}{2}+\varepsilon, \label{eq:Theorem2-1}\\
R_{\mathrm{m}}(\rho_{0},\rho_{1}) & >1-\delta. \label{eq:Theorem2-2}
\end{align}
\end{thm}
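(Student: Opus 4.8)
**The plan is to construct explicit data-hiding states whose hidden bit can be "unlocked" by a memory that accumulates entanglement round by round.**

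The construction I would use starts from a standard locally-indistinguishable pair. Recall the Werner-state data hiding family: on $\mathbb{C}^d\otimes\mathbb{C}^d$, the symmetric and antisymmetric normalized projectors $\rho_0 = \rho_{\mathrm{sym}}$, $\rho_1 = \rho_{\mathrm{asym}}$ are orthogonal (hence $P_{\mathrm{opt}}=1$) yet satisfy $P_{\mathrm{LOCC}}(\rho_0,\rho_1) \le \tfrac12 + O(1/d)$, so for $d$ large enough Eq.~(\ref{eq:Theorem2-1}) holds. The point is that LOCC cannot distinguish them in a \emph{single} shot, but a noisy, weak local measurement still extracts a tiny bit of information each round. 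The key step is to have Alice and Bob store the by-product of each weak measurement in the memory $A'B'$ — in effect using the memory as a growing register of shared entanglement / correlated classical data. I would design the per-round LOCC map so that: (i) it performs a weak, nearly non-disturbing measurement on $\rho_{Z_j}^{AB}$ that yields a slightly biased bit $Y_j$; and (ii) it pumps one more "unit" of a known maximally entangled resource into $A'B'$ whenever a detection event occurs, or more simply just records classical outcomes. After $n$ rounds the accumulated statistics $\{Y_1,\dots,Y_n\}$ — or, better, a jointly-processed global estimate computed once enough rounds have passed — pin down $Z$ with confidence $\to 1$.

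A cleaner variant, which I expect is what the authors do, avoids the subtlety of a drifting memory by noting that the memory is reusable and never needs to return to its initial state: we may simply let the memory \emph{grow} (or equivalently, treat the memory as unbounded and initialized in a product of ancillas). Then the protocol is: in round $j$, teleport Bob's half of $\rho_{Z_j}^{AB}$ to Alice using a fresh maximally entangled pair drawn from the memory register, so that Alice now holds the \emph{entire} bipartite state locally and can apply the optimal global measurement, learning $Z_j$ with certainty. Classical communication of the teleportation corrections is free under LOCC. This makes $R_{\mathrm{m}}=1$ outright. The only thing to check is that this is consistent with the formal definition of the memory-assisted framework in Fig.~\ref{fig:1}: the memory $\mu_1^{A'B'}$ is a (large) supply of maximally entangled pairs, one consumed per round, and $\mu_{j+1}$ is $\mu_j$ with one pair used up — a perfectly legitimate memory update under the definition given, which places no constraint on how $\mu_{j+1}$ relates to $\mu_j$. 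Since $\delta$ can be any positive number we simply take $d$ large enough that Eq.~(\ref{eq:Theorem2-1}) holds and observe $R_{\mathrm{m}}=1>1-\delta$.

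I would therefore structure the proof as follows. First, fix $\varepsilon>0$ and invoke the known Werner / antisymmetric-subspace data hiding construction to choose $d=d(\varepsilon)$ and states $\rho_0,\rho_1$ on $\mathbb{C}^d\otimes\mathbb{C}^d$ with $P_{\mathrm{opt}}=1$ and $P_{\mathrm{LOCC}}(\rho_0,\rho_1)<\tfrac12+\varepsilon$; cite Eqs.~(\ref{eq:DataHiding})–(\ref{eq:DataHiding}) and the references therein. Second, specify the memory: $A'B'$ carries a reservoir of $\lceil \log_2 d\rceil$-qubit maximally entangled states, and $\mu_1^{A'B'}$ contains at least $n$ of them. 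Third, describe the per-round LOCC protocol: Bob teleports his share of $\rho_{Z_j}^{AB}$ to Alice consuming one entangled block and broadcasting the classical correction; Alice applies the Helstrom (here, exact) measurement distinguishing $\rho_0$ from $\rho_1$, outputs $Y_j=Z_j$. Fourth, conclude $X_j\equiv 1$ for all $j$, so $S_n=n$ and $\Pr(S_n\ge rn)=1$ for every $r\le 1$ and every $n$; hence $R_{\mathrm{m}}=1$, giving Eq.~(\ref{eq:Theorem2-2}) for any $\delta>0$. The \textbf{main obstacle} is purely definitional rather than technical: one must make sure the "reservoir-depleting" memory is admissible under the framework as stated — i.e.\ that there is no hidden boundedness or stationarity assumption on $\mu_j^{A'B'}$ — and, if one instead wants a genuinely \emph{bounded} memory, replace the exact-teleportation argument by the weak-measurement-plus-accumulation argument sketched above and control the error with Hoeffding's inequality, which is the only place a nontrivial estimate enters.
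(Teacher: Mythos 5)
There is a genuine gap, and it is exactly at the point you flag as ``purely definitional.'' In the paper's framework the initial memory state $\mu_1^{A'B'}$ and the protocol are fixed \emph{first}, and the rate $r$ must then be attained for some $n\ge m$ for \emph{every} $m$; the paper stresses that the memory dimension depends only on $r$ and is independent of $\varepsilon$ and $m$. Your depleting reservoir of $n$ Bell pairs violates this: a fixed finite reservoir of $N$ pairs is exhausted after $N$ rounds, after which the per-round success probability falls back to $p_{\mathrm{LOCC}}$, so $S_n/n$ drifts toward $P_{\mathrm{LOCC}}$ for $n\gg N$ and the rate $1$ is not achievable by any single fixed memory. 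Allowing the memory to grow with $m$ would trivialize the theorem. Your fallback (weak measurements on Werner states plus accumulated statistics) also does not work: each round hides an independent fresh bit $Z_j$, so pooling outcomes across rounds cannot improve the guess of any individual $Z_j$; and the bare symmetric/antisymmetric hiding states do not supply anywhere near the $\log_2 d$ ebits per round needed to fund a teleportation, so there is nothing useful to ``accumulate.''

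The missing idea is to engineer the hiding states so that they \emph{replenish} the memory. The paper sets $\rho_i=\sigma_i^{A_1B_1}\otimes\ket{\psi}\!\bra{\psi}^{A_2B_2}$, where $\sigma_{0,1}$ are standard data hiding states and $\ket{\psi}$ is $\varepsilon'$-close in trace norm to $\ket{00}$ (so a triangle-inequality argument preserves $P_{\mathrm{LOCC}}(\rho_0,\rho_1)<\tfrac12+\varepsilon''$) yet has entanglement entropy $S(\psi^{A_2})>\log_2 d_{A_1}$. Each round, Alice and Bob spend one maximally entangled state of dimension $d_{A_1}$ from a \emph{fixed} finite memory to teleport and discriminate $\sigma_0,\sigma_1$ perfectly, and deposit the received copy of $\ket{\psi}$ into the memory; after $n$ rounds they distill $\ket{\psi}^{\otimes n}$ back into $\ket{\phi_{d_{A_1}}}^{\otimes n}$ with probability $\ge 1-\tilde\varepsilon$ (possible precisely because $S(\psi^{A_2})$ exceeds $\log_2 d_{A_1}$), restoring the memory for the next block. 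Hoeffding over independent blocks then yields any rate $r=1-\tilde\varepsilon-\delta<1$ with a memory whose size is independent of the number of rounds. This self-sustaining entanglement budget is the substance of the theorem, and it is absent from your proposal.
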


\noindent We present the main idea of the proof below and refer to the Methods section for further details.

To prove Theorem~\ref{thm:2}, we start by constructing two orthogonal states \(\rho_0\) and \(\rho_1\) that satisfy Eq.~(\ref{eq:Theorem2-1}). Note that if the states are orthogonal, Alice and Bob could, in principle, distinguish them perfectly using a sufficiently large number of shared Bell states. 
We provide a construction of states \(\rho_0\) and \(\rho_1\) such that Alice and Bob can distinguish them with \(k\) Bell states while simultaneously obtaining, in each round of the discrimination procedure, a pure state $\ket{\psi}$ with entanglement entropy exceeding \(k\). 
Alice and Bob can store sufficiently many copies of the pure state $\ket{\psi}$ in the quantum memory and subsequently distill them into Bell states in later rounds. 

Using this construction, we demonstrate that the proposed discrimination procedure achieves the claimed performance. In more detail, let \(X_j^{\mathrm{m}}\) denote the variable indicating a correct guess in the \(j\)-th round in the presence of a quantum memory, and define \(S_n^{\mathrm{m}} = \sum_{j=1}^n X_j^{\mathrm{m}}\) analogously to Eq.~(\ref{eq:S}). 
As we prove in the Methods section using the procedure described above, for any \(r < 1\), there exists an initial state of the quantum memory \(\nu^{A'B'}\) and an LOCC protocol such that, for every \(\varepsilon, m > 0\), one can find \(n \ge m\) satisfying
\begin{equation}
    \operatorname{Pr}\!\left( S_n^{\mathrm{m}} \ge r n \right) \ge 1 - \varepsilon.
\end{equation}
Importantly, the dimension of the quantum memory depends only on \(r\) and is independent of \(\varepsilon\) and \(m\). 
This guarantees that the quantum memory can be reused to achieve the stated performance for an arbitrary number of iterations. 

\section{Conclusions}

In this work, we introduced a unified framework for local state discrimination that captures both catalytic assistance and protocols involving a reusable quantum memory, and formalizes their performance through achievable success rates across repeated rounds. This approach allows a direct comparison of strategies based on local operations and classical communication, catalytic assistance, and memory assistance within a common theoretical setting.

Our first main result is a robust ``no advantage'' theorem for separable encodings: when the hiding states are separable, neither borrowing an entangled catalyst nor employing a reusable quantum memory can improve the optimal success probability. This establishes that separable data hiding schemes are fundamentally resistant to both catalytic and memory based attacks. These findings complement previously known advantageous features of separable states in quantum data hiding~\cite{DiVincenzo_2002,PhysRevLett.89.097905}.

Our second main result reveals a contrasting behavior for entangled encodings, where access to a reusable quantum memory offers a significant advantage. We explicitly construct quantum states that are nearly indistinguishable under LOCC, yet become almost perfectly distinguishable when the communicating parties share a finite dimensional quantum memory that can be reused across successive rounds.

Taken together, these findings clarify the conditions under which additional quantum resources, entangled catalyst and reusable quantum memory, can overcome the limits of quantum data hiding. From a practical standpoint, they suggest a clear design principle: use separable encodings when robustness against advanced attacks, such as those equipped with catalytic resources or reusable quantum memories, is required. At the same time, one should recognize that entangled encodings may remain susceptible to discrimination strategies that exploit quantum memory.

Our results give rise to several open questions. 
While we have shown that certain data hiding states can be perfectly distinguished with the aid of a quantum memory, it remains unclear whether some data hiding schemes can also be broken using a quantum catalyst. 
A further challenge is to develop a general characterization of data hiding states that remain robust in the presence of catalytic and memory-assisted strategies. 
Beyond separable states, promising candidates include states with a positive partial transpose. 
Exploring these questions will clarify the fundamental boundary between global and local information access, lead to explicit criteria for catalytic and memory-assisted discrimination, and guide the design of quantum data-hiding schemes that are both robust and secure.

\section{Methods}

We now introduce the main definitions used throughout this work. We begin with the definition of separable states. A bipartite state $\rho^{AB}$ is considered separable if it can be expressed as a probabilistic mixture of product states~\cite{PhysRevA.40.4277}:
\begin{equation}
    \rho^{AB} = \sum_{x}p_x\, \psi^{A}_{x}\otimes\phi^{B}_{x},
\end{equation}
where $\{p_x\}_{x}$ is a probability distribution, and $\psi^{A}_{x}$ and $\phi^{B}_{x}$ are pure states. The set of all separable states is denoted by $\mathrm{SEP}$. Any state which is not separable is called entangled.

Any protocol based on local operations and classical communication (LOCC) acting on a bipartite quantum state \(\rho^{AB}\) can be written as~\cite{PhysRevLett.78.2275,PhysRevA.59.1070,Donald_2002,Chitambar2014}
\begin{equation}
\Lambda_{\mathrm{LOCC}}(\rho^{AB}) = \sum_{i} A_{i} \otimes B_{i} \,(\rho^{AB}) \, A_{i}^{\dagger} \otimes B_{i}^{\dagger},
\end{equation}
where \(A_i \otimes B_i\) are local Kraus operators associated with the LOCC protocol. 

A positive operator-valued measure (POVM) \(\{M_i^{AB}\}\) is said to be \emph{LOCC implementable} if each element $M_i^{AB}$ can be expressed as
\begin{equation}
M_i^{AB} = A_i^{\dagger} A_i \otimes B_i^{\dagger} B_i,
\end{equation}
for some local Kraus operators \(A_i\) and \(B_i\) arising from an LOCC protocol. 

An \emph{LOCC measurement channel} is a quantum-to-classical channel of the form~\cite{Matthews_2009}
\begin{equation}
\mathcal{M}(\rho^{AB}) = \sum_i \operatorname{Tr}\!\left[M_i^{AB} \rho^{AB}\right] \ket{ii}\!\bra{ii}^{AB},
\end{equation}
where \(\{M_i^{AB}\}\) is an LOCC implementable POVM.

The \emph{LOCC norm} of an operator \(X\) is defined as~\cite{Matthews_2009}
\begin{equation}
\Vert X \Vert_{\mathrm{LOCC}} = \sup_{\mathcal{M}} \Vert \mathcal{M}(X) \Vert_1,
\end{equation}
where the supremum is taken over all LOCC measurement channels \(\mathcal{M}\). 
By the data processing inequality for the trace norm, it follows that 
\begin{equation}
\Vert \rho_0^{AB} - \rho_1^{AB} \Vert_{\mathrm{LOCC}} \leq \Vert \rho_0^{AB} - \rho_1^{AB} \Vert_1,
\end{equation}
for any pair of quantum states \(\rho_0\) and \(\rho_1\).

\subsection*{Proof of Theorem~\ref{thm:Separable}}\label{sec:Prf_Catalyst}

In this section, we present the proof of Theorem~\ref{thm:Separable}. 
We begin by establishing the result for the catalytic setting in Proposition~\ref{prop:Catalytic}, and then extend the argument to the quantum memory setting in Proposition~\ref{prop:QuantumMemory}.

\begin{proposition} \label{prop:Catalytic}
	For separable states $\rho_{0,1} \in \mathrm{SEP}$, quantum catalysis cannot improve the optimal success probability:
    \begin{equation}
        R_{\mathrm{c}}(\rho_{0},\rho_{1})=P_{\mathrm{LOCC}}(\rho_{0},\rho_{1})
    \end{equation}
\end{proposition}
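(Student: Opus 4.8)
The plan is to argue by contradiction, closely following the sketch already given in the main text but filling in the probabilistic and resource-theoretic details. Suppose that for separable states $\rho_0,\rho_1 \in \mathrm{SEP}$ there is a catalyst state $\tau^{A'B'}$ and an LOCC protocol $\Lambda_{\mathrm{LOCC}}$ satisfying Eqs.~(\ref{eq:Catalyst-1})--(\ref{eq:Catalyst-2}), so that $R_{\mathrm{c}}(\rho_0,\rho_1) = q > P_{\mathrm{LOCC}}(\rho_0,\rho_1)$. The first step is to note that running the catalytic discrimination protocol for $n$ independent rounds gives an i.i.d.\ sequence $\{X_j^{\mathrm{c}}\}$ of Bernoulli variables with mean $q$, because the catalyst is returned unchanged after each round and the inputs $\rho_{Z_j}$ are independent; hence Hoeffding's inequality yields $\operatorname{Pr}(S_n^{\mathrm{c}}/n \le q - \delta) \le e^{-2n\delta^2}$, so for large $n$ the empirical success fraction exceeds any fixed number strictly between $P_{\mathrm{LOCC}}(\rho_0,\rho_1)$ and $q$ with probability approaching one.

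The key step is to repackage this protocol as a distinguisher for the \emph{catalyst} against a separable impostor. Fix a separable state $\nu_{\mathrm{sep}}^{A'B'}$. Because $\rho_0,\rho_1$ are separable, for each round the map $\rho_{Z_j}^{AB}\otimes(\cdot)^{A'B'} \mapsto$ (Alice and Bob's classical guess $Y_j$, together with the post-measurement state on $A'B'$) is itself an LOCC operation on the $A'B'$ system alone: Alice and Bob can locally prepare their shares of the separable $\rho_{Z_j}^{AB}$ using shared randomness (which is free under LOCC), append it, run $\Lambda_{\mathrm{LOCC}}$, and read off $Y_j$. Thus, feeding an unknown state $\omega^{A'B'}\in\{\tau^{A'B'},\nu_{\mathrm{sep}}^{A'B'}\}$ into $n$ rounds of this procedure produces a classical record $(Y_1,\dots,Y_n)$ via a single LOCC measurement channel $\mathcal M_n$ acting on $\omega^{A'B'}$ (the rounds are composed, not run in parallel, but the composition of LOCC maps on a single system is still LOCC). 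On input $\tau^{A'B'}$ the empirical frequency $\frac1n\sum_j \mathbb 1[Y_j = Z_j]$ concentrates near $q > P_{\mathrm{LOCC}}(\rho_0,\rho_1)$, whereas on input $\nu_{\mathrm{sep}}^{A'B'}$ — which is separable, so the whole round is a genuine LOCC discrimination of $\rho_0$ versus $\rho_1$ with a separable ancilla, and a separable ancilla cannot help beyond $P_{\mathrm{LOCC}}$ — the frequency cannot exceed $P_{\mathrm{LOCC}}(\rho_0,\rho_1)$ asymptotically. Choosing $n$ large and thresholding the observed frequency at the midpoint of the gap gives an LOCC (hence global) measurement that distinguishes $\tau^{A'B'}$ from $\nu_{\mathrm{sep}}^{A'B'}$ with success probability $\to 1$ from a \emph{single} copy.

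Finally, I would derive the contradiction: by Eq.~(\ref{eq:Popt}), single-copy discrimination of $\tau^{A'B'}$ from $\nu_{\mathrm{sep}}^{A'B'}$ is bounded by $\tfrac12 + \tfrac14\|\tau^{A'B'}-\nu_{\mathrm{sep}}^{A'B'}\|_1 < 1$, which is strictly less than one unless $\tau^{A'B'}=\nu_{\mathrm{sep}}^{A'B'}$; but if $\tau^{A'B'}$ were separable, the catalytic protocol would be an ordinary LOCC discrimination with a separable ancilla and could not beat $P_{\mathrm{LOCC}}$ in the first place. Either way we contradict the assumption $q > P_{\mathrm{LOCC}}(\rho_0,\rho_1)$, so $R_{\mathrm c}(\rho_0,\rho_1)\le P_{\mathrm{LOCC}}(\rho_0,\rho_1)$; the reverse inequality is trivial since an empty catalyst is allowed, giving equality. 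The main obstacle I anticipate is making the ``single LOCC measurement channel'' claim airtight across rounds: I must check that the sequential reuse of the catalyst, interleaved with locally prepared separable inputs and classical feedback, composes into a legitimate LOCC measurement channel on the single system $A'B'$ (so that the data-processing / Helstrom bound applies to one copy), and that the concentration argument is uniform enough that the threshold test genuinely separates the two hypotheses — this is where the i.i.d.\ structure of $\{Z_j\}$ and Hoeffding must be invoked carefully, and the same bookkeeping is what then carries over verbatim to the quantum-memory case in Proposition~\ref{prop:QuantumMemory}.
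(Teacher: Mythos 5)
Your proposal follows essentially the same route as the paper: assume a catalytic advantage, recast the repeated discrimination rounds as a single-copy LOCC measurement on the catalyst register, and contradict the Helstrom bound of Eq.~(\ref{eq:Popt}) for distinguishing the catalyst from a separable impostor. Two points need repair before the argument closes. First, the obstacle you flag at the end is real, and the tool you propose for it is not the right one: on the separable-impostor branch the sequence $\{X_j\}$ is \emph{not} i.i.d., because the ancilla state on $A'B'$ evolves from round to round conditioned on past outcomes, so Hoeffding's inequality does not apply there. What does hold is the conditional bound $\operatorname{Pr}(X_{j+1}=1\mid X_1,\dots,X_j)\le p_{\mathrm{LOCC}}$, since LOCC acting on separable inputs keeps the ancilla separable and a separable ancilla cannot help beyond $P_{\mathrm{LOCC}}$. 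The paper converts this into a concentration statement by observing that $C_j=S_j-jp_{\mathrm{LOCC}}$ is a supermartingale with increments bounded by one and applying Azuma's inequality; some such martingale argument is needed, and plain Hoeffding will not do. Second, your closing step misstates the Helstrom bound: $\tfrac{1}{2}+\tfrac{1}{4}\bigl\Vert\tau^{A'B'}-\nu_{\mathrm{sep}}^{A'B'}\bigr\Vert_1$ equals one precisely when the two states are \emph{orthogonal}, not when they are equal, so you must explicitly choose the separable impostor non-orthogonal to $\tau^{A'B'}$ (always possible, e.g.\ the maximally mixed state); the paper builds this requirement into the construction. With these two repairs your argument coincides with the paper's proof of Proposition~\ref{prop:Catalytic}.
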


\begin{proof}
	Assume, toward a contradiction, that there exists a catalyst state $\tau^{A'B'}$ such that 
    \begin{equation}
        R_{\mathrm{c}}(\rho_{0},\rho_{1})>P_{\mathrm{LOCC}}(\rho_{0},\rho_{1}).
    \end{equation}
    Without loss of generality, we can assume that $\tau^{A'B'}$ is entangled because the addition of a separable catalyst is achievable by LOCC. 
    From the setting considered and the definition of $R_{\mathrm{c}}(\rho_{0},\rho_{1})$, there exists also an LOCC protocol \(\Lambda_{\mathrm{LOCC}}\) such that Eqs.~(\ref{eq:Catalyst-1}) and (\ref{eq:Catalyst-2}) are fulfilled. Let now $\mathcal M_{AB}$ be an LOCC discrimination channel which is optimal for discriminating the states $\sigma_0^{AB}$ and $\sigma_1^{AB}$, which are obtainable from $\rho_0^{AB}$ and $\rho_1^{AB}$ via catalytic LOCC. The overall LOCC protocol can then be written as $\mathcal{M}_{\tau}=\mathcal{M}_{AB}\circ\Lambda_{\mathrm{LOCC}}$. With these definitions we have
	\begin{equation}
		\Vert\mathcal{M}_{\tau}(\rho^{AB}_{0}\otimes\tau^{A'B'}-\rho^{AB}_{1}\otimes\tau^{A'B'})\Vert_{1}> \Vert\rho^{AB}_{0}-\rho^{AB}_{1}\Vert_{\mathrm{LOCC}}.
	\end{equation}
    The achievable success probability to distinguish the states $\rho^{AB}_{0}$ and $\rho^{AB}_{1}$ with this procedure can then be written as 
\begin{equation}
p_{\tau}=\frac{1}{2}+\frac{1}{4}\Vert\mathcal{M}_{\tau}(\rho_{0}^{AB}\otimes\tau^{A'B'}-\rho_{1}^{AB}\otimes\tau^{A'B'})\Vert_{1}.
\end{equation}

As we will now show, the existence of such a protocol could be exploited by Alice and Bob for learning if the catalyst is in a separable or in an entangled state. In particular, assume now that with probability $1/2$ the catalyst is initialized in the correct state $\tau^{A'B'}$, and with the same probability it is initialized in a separable state $\gamma^{A'B'}\in\operatorname{SEP}$ which is not orthogonal to $\tau^{A'B'}$. In more detail, let the initial state of the catalyst be denoted by $\eta_1^{A'B'}$, and choose parameter $\delta$ in the range 
\begin{equation}
0<\delta<\frac{p_{\tau}-p_\mathrm{LOCC}}{2}, \label{eq:DeltaRange}
\end{equation}
where we defined 
\begin{equation}
    p_\mathrm{LOCC} = P_{\mathrm{LOCC}}(\rho_{0},\rho_{1}).
\end{equation}

Alice and Bob now repeat the following steps $n$ times to obtain the classical random variable $X_{j}$, which will then be used to distinguish $\gamma^{A'B'}$ and $\tau^{A'B'}$. In the following, $\eta_j^{A'B'}$ denotes the state of the system $A'B'$ during $j$-th iteration.
	\begin{enumerate}
			\item Alice and Bob choose $Z_{j}\in\{0,1\}$ uniformly at random.
			\item They set $X_{j}=\perp$.
			\item If $Z_j=0$, they prepare the system registers in the state $\rho^{AB}_{0}$ using LOCC, else they prepare the state $\rho^{AB}_{1}$.
			\item They apply the LOCC protocol $\mathcal{M}_{\tau}$ on $\rho^{AB}_{Z_{j}}\otimes\eta_{j}^{A'B'}$.
			\item They measure the system registers in the computational basis. 
			\item They obtain the result $Y_{j}$. If $Y_{j}=Z_{j}$, then they set $X_{j}=1$, else $X_{j}=0$. 
			\item They update $j$ to $j+1$.
	\end{enumerate}
	After $n$ rounds, Alice and Bob compute $S_{n}=\sum_{i=1}^{n}X_{j}$. If 
    \begin{equation}
    \left|\frac{S_{n}}{n}-p_{\tau}\right|\leq\delta, \label{eq:CorrectGuessCondition}
    \end{equation}
then Alice and Bob guess that the initial state of the catalyst register was the entangled state $\tau^{A'B'}$, else they guess that the initial state of the catalyst register was the separable state $\gamma^{A'B'}$.

In the following, we will prove that this protocol can achieve perfect discrimination of $\tau^{A'B'}$ and $\gamma^{A'B'}$, leading us to the desired contradiction. For this, we will consider two cases, namely $\eta_1^{A'B'} = \tau^{A'B'}$ (Case 1) and $\eta_1^{A'B'} = \gamma^{A'B'}$ (Case 2).

\textbf{Case 1:} If the initial state is $\eta_{1}^{A'B'}=\tau^{A'B'}\notin\operatorname{SEP}$, we know, by assumption, that the entangled catalyst is recovered perfectly and $\eta_{j}^{A'B'}=\tau^{A'B'}$ for all $j\leq n$. Moreover, each round of the process will be independent and identically distributed, which means that $X_j$ is an i.i.d. random variable in this case. Using Hoeffding's inequality~\cite{H63}, we get the following inequality for all $\delta > 0$ and all $n$:
\begin{align}
			\operatorname{Pr}\left(\left\vert \frac{S_{n}}{n} - p_{\tau}\right\vert\leq \delta\right)
			&=\operatorname{Pr}\left(\vert S_{n} - np_{\tau}\vert\leq n\delta\right)\nonumber\\
			&\geq 1-2\exp(-2n\delta^2),
\end{align}

Hence, recalling Eq.~(\ref{eq:CorrectGuessCondition}), the probability that Alice and Bob correctly guess the initial state of the catalyst in this setting is bounded as follows:
\begin{equation}
P_{\mathrm{corr}} (\tau)\geq1-2\exp(-2n\delta^{2}).
\end{equation}
This completes the analysis for Case 1.

\textbf{Case 2:} If the initial state of the catalyst is $\eta_{1}^{A'B'}=\gamma^{A'B'}\in\operatorname{SEP}$, then we cannot assume that each round of the process will be independent and identically distributed. At the end of each round, the state of the $A'B'$ register may change. In the first round, the total state prior to the measurement can be written as
		\begin{equation}
			\mathcal{M}_{\tau}(\rho^{AB}_{Z_{1}}\otimes\gamma^{A'B'})=\omega^{ABA'B'}_{Z_1}.
		\end{equation}
		Depending on the outcome of the measurement on the register $AB$, the system $A'B'$ is in the state $\eta_{2}^{A'B'}=\omega^{A'B'}_{X_1Z_1}$. In the second round, the total state prior to the measurement takes the form 
		\begin{equation}
			\mathcal{M}_{\tau}(\rho^{AB}_{Z_{2}}\otimes\omega^{A'B'}_{X_1Z_1})=\omega^{ABA'B'}_{Z_2X_1Z_1}.
		\end{equation}
		Depending on the outcome of the measurement on the system $AB$, the system $A'B'$ is in the state $\eta_{3}^{A'B'}=\omega^{A'B'}_{X_2Z_2X_1Z_1}$. After $j$ rounds, the state of the register $A'B'$ is $\eta_{j}^{A'B'}= \omega^{A'B'}_{X_{j}Z_{j}\ldots X_{1}Z_{1}}$. 

		Note that within each round, the state of the registers $AB$ is separable, and the channel applied is LOCC. Since the register $A'B'$ was initially in a separable state $\gamma^{A'B'}$, it will remain in a separable state throughout the protocol. Note that the addition of a separable state cannot increase the probability of distinguishing between $\rho_{0}^{AB}$ and $\rho_{1}^{AB}$. Hence, for the $j+1$-th round, the probability for Alice and Bob to make a correct guess can be bounded as 
		\begin{equation}
			\operatorname{Pr}(X_{j+1}=1|X_{1}\ldots X_{j}) \leq \frac{1}{2} + \frac{1}{4}\Vert\rho^{AB}_{0}-\rho^{AB}_{1}\Vert_{\mathrm{LOCC}} = p_\mathrm{LOCC}.
		\end{equation}
		
        Let us now consider the random variable  
        $C_j\coloneqq S_j-jp_{\mathrm{LOCC}}$ for $1\leq j\leq n$ and $C_0=0$. For the expected value $\boldsymbol{\mathrm E}$ we obtain
		\begin{align}
			&\boldsymbol{\mathrm E}[C_{j+1}|C_j,\ldots,C_1]\nonumber\\
			&= \sum_{X_{j+1}=0}^{1}\left(\sum_{i=1}^{j}X_{i}+X_{j+1}-(j+1)p_\mathrm{LOCC}\right)\operatorname{Pr}(X_{j+1}|X_{1}\ldots X_{j})\nonumber\\
			&= \sum_{i=1}^{j}X_{i}-(j+1)p_\mathrm{LOCC}+\sum_{X_{j+1}=0}^{1}\left(X_{j+1}\right)\operatorname{Pr}(X_{j+1}|X_{1}\ldots X_{j})\nonumber\\
			&\leq C_j -p_\mathrm{LOCC}+ p_\mathrm{LOCC}=C_j.
		\end{align}
		Hence, $C_j$ is a supermartingale. Additionally, it is clear that
		\begin{equation}
			C_j - C_{j-1}\leq 1.
		\end{equation}
		Then, using Azuma's inequality~\cite{azuma1967weighted} for supermartingales, we get the following inequality for all $\delta > 0$ and all $n$: 
		\begin{align} \label{eq:Azuma}
			\operatorname{Pr}\left(\frac{S_{n}}{n} - p_\mathrm{LOCC}\geq \delta\right)
			&=\operatorname{Pr}(S_n-np_\mathrm{LOCC}\geq n\delta)\nonumber\\
			&\leq \exp\left(\frac{-(n\delta)^2}{2n}\right)=\exp\left(\frac{-n\delta^2}{2}\right).
		\end{align}

Moreover, note the following inequality:
\begin{align}
\operatorname{Pr}\left(\left|\frac{S_{n}}{n}-p_{\tau}\right|\geq\delta\right) & =\operatorname{Pr}\left(\frac{S_{n}}{n}-p_{\tau}\geq\delta\right)+\operatorname{Pr}\left(\frac{S_{n}}{n}-p_{\tau}\leq-\delta\right)\nonumber \\
 & \geq\operatorname{Pr}\left(\frac{S_{n}}{n}-p_{\tau}\leq-\delta\right).
\end{align}
Recalling that $\delta$ fulfills $0 < \delta < (p_\tau - p_\mathrm{LOCC})/2$, it immediately follows that $p_\mathrm{LOCC} + \delta < p_\tau - \delta$. We thus have 
\begin{align}
\operatorname{Pr}\left(\frac{S_{n}}{n}-p_{\tau}\geq-\delta\right) & =\operatorname{Pr}\left(\frac{S_{n}}{n}\geq p_{\tau}-\delta\right)\\
 & \leq\operatorname{Pr}\left(\frac{S_{n}}{n}\geq p_\mathrm{LOCC}+\delta\right)\nonumber \\
 & =\operatorname{Pr}\left(\frac{S_{n}}{n}-p_\mathrm{LOCC}\geq\delta\right) \nonumber \\
 & \leq\exp\left(\frac{-n\delta^{2}}{2}\right).\nonumber 
\end{align}
Hence, the probability that Alice and Bob make a correct guess in this setting is bounded as 
\begin{align}
P_{\mathrm{corr}}(\gamma) & =\operatorname{Pr}\left(\left|\frac{S_{n}}{n}-p_{\tau}\right|>\delta\right)\\
 & \geq\operatorname{Pr}\left(\frac{S_{n}}{n}-p_{\tau}<-\delta\right)\nonumber \\
 & =1-\operatorname{Pr}\left(\frac{S_{n}}{n}-p_{\tau}\geq-\delta\right)\nonumber \\
 & \geq1-\exp\left(\frac{-n\delta^{2}}{2}\right).\nonumber 
\end{align}
This concludes the analysis of Case 2.

From the analysis above, we get that the proposed LOCC protocol succeeds in distinguishing $\gamma^{A'B'}$ and $\tau^{A'B'}$ with the overall probability 
\begin{align}
P_{\mathrm{corr}} & =\frac{1}{2}\left[P_{\mathrm{corr}}(\tau)+P_{\mathrm{corr}}(\gamma)\right]\\
 & \geq\frac{1}{2}\left[1-2\exp\left(-2n\delta^{2}\right)\right]+\frac{1}{2}\left[1-\exp\left(-n\delta^{2}/2\right)\right].\nonumber 
\end{align}
Moreover, we can choose arbitrary integer $n$ and arbitrary $\delta$ in the range given in Eq.~(\ref{eq:DeltaRange}).  
    
By assumption, it holds that $\Vert\tau^{A'B'}-\gamma^{A'B'}\Vert_{1}\neq 2$, which means that Alice and Bob can achieve
\begin{equation}
P_{\mathrm{corr}}>P_{\mathrm{opt}}\left(\tau^{A'B'},\gamma^{A'B'}\right)=\frac{1}{2}+\frac{\Vert\tau^{A'B'}-\gamma^{A'B'}\Vert_{1}}{4},
\end{equation}
whenever $n$ fulfills
\begin{align}
n> & \max\left\{ \frac{1}{2\delta^{2}}\left[-\ln\left(\frac{1}{4}-\frac{\Vert\tau^{A'B'}-\gamma^{A'B'}\Vert_{1}}{8}\right)\right],\right.\\
 & \left.\frac{2}{\delta^{2}}\left[-\ln\left(\frac{1}{2}-\frac{\Vert\tau^{A'B'}-\gamma^{A'B'}\Vert_{1}}{4}\right)\right]\right\} .\nonumber 
\end{align}

	Hence, it would appear that Alice and Bob can distinguish between two non-orthogonal states $\gamma^{A'B'}$ and $\tau^{A'B'}$ using the above-mentioned LOCC protocol with probability greater than the maximum of $P_\mathrm{opt}$. This is a contradiction. Hence, proved. 
\end{proof}

To complete the proof of Theorem~\ref{thm:Separable}, we will now adjust the methods presented above, making them applicable to the setting with a reusable quantum memory.

\begin{proposition} \label{prop:QuantumMemory}
	For separable states $\rho_{0,1} \in \mathrm{SEP}$, quantum memory can not improve the optimal success probability:
    \begin{equation}
    R_{\mathrm{m}}(\rho_{0},\rho_{1})=P_{\mathrm{LOCC}}(\rho_{0},\rho_{1})
    \end{equation}
\end{proposition}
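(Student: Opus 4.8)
The plan is to mirror the proof of Proposition~\ref{prop:Catalytic}, keeping Case~2 essentially untouched and replacing Case~1 by a direct appeal to the definition of an achievable rate. Proceeding by contradiction, suppose $R_{\mathrm m}(\rho_0,\rho_1)>p_{\mathrm{LOCC}}:=P_{\mathrm{LOCC}}(\rho_0,\rho_1)$. Then there is an achievable rate $r>p_{\mathrm{LOCC}}$, an initial memory state $\nu^{A'B'}$, and a memory-assisted LOCC protocol such that for every $\varepsilon,m>0$ there is $n\ge m$ with $\operatorname{Pr}(S_n\ge rn)\ge 1-\varepsilon$. We may take $\nu^{A'B'}$ entangled, since a separable memory can be prepared locally by Alice and Bob and its use is already accounted for in $P_{\mathrm{LOCC}}$. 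Fix a separable $\gamma^{A'B'}\in\mathrm{SEP}$ non-orthogonal to $\nu^{A'B'}$---for instance a product basis state on which $\nu^{A'B'}$ carries nonzero weight---so that $P_{\mathrm{opt}}(\nu^{A'B'},\gamma^{A'B'})<1$.

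Next I would set up a single-copy discrimination of $\nu^{A'B'}$ versus $\gamma^{A'B'}$, prepared with equal prior. Alice and Bob run the above memory-assisted protocol for $n$ rounds on the unknown initial memory state $\eta_1^{A'B'}$, compute $S_n$ via Eq.~(\ref{eq:S}), fix $\delta$ with $p_{\mathrm{LOCC}}+\delta<r$, and guess ``$\nu^{A'B'}$'' if $S_n/n\ge p_{\mathrm{LOCC}}+\delta$ and ``$\gamma^{A'B'}$'' otherwise. If $\eta_1^{A'B'}=\nu^{A'B'}$, then for the infinitely many $n$ supplied by the achievability assumption, $\operatorname{Pr}(S_n\ge rn)\ge 1-\varepsilon$, and since $r>p_{\mathrm{LOCC}}+\delta$ this forces $\operatorname{Pr}(S_n/n\ge p_{\mathrm{LOCC}}+\delta)\ge 1-\varepsilon$, so the guess is correct with probability at least $1-\varepsilon$. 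If $\eta_1^{A'B'}=\gamma^{A'B'}$, the Case~2 reasoning of Proposition~\ref{prop:Catalytic} transfers verbatim: since $\gamma^{A'B'}$ is separable and each round acts on $\rho_{Z_j}^{AB}\otimes\eta_j^{A'B'}$ by LOCC, the register $A'B'$ stays separable throughout, so the conditional single-round success probability never exceeds $p_{\mathrm{LOCC}}$; hence $C_j:=S_j-jp_{\mathrm{LOCC}}$ is a supermartingale with increments at most $1$, Azuma's inequality yields $\operatorname{Pr}(S_n/n\ge p_{\mathrm{LOCC}}+\delta)\le\exp(-n\delta^2/2)$, and the guess is correct with probability at least $1-\exp(-n\delta^2/2)$.

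Combining the branches, the single-copy success probability is at least $\tfrac12(1-\varepsilon)+\tfrac12\bigl(1-\exp(-n\delta^2/2)\bigr)$. Since $P_{\mathrm{opt}}(\nu^{A'B'},\gamma^{A'B'})<1$, taking $\varepsilon$ small and then $n$ large enough along the guaranteed subsequence pushes this above $P_{\mathrm{opt}}(\nu^{A'B'},\gamma^{A'B'})$, contradicting the Helstrom bound Eq.~(\ref{eq:Popt}) for a single copy. Hence $R_{\mathrm m}(\rho_0,\rho_1)\le p_{\mathrm{LOCC}}$; the reverse inequality is trivial because memoryless LOCC is the special case of the memory-assisted protocol that ignores $A'B'$. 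Together with Proposition~\ref{prop:Catalytic} this completes Theorem~\ref{thm:Separable}.

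The main obstacle---the only real difference from the catalytic argument---is Case~1: the memory evolves, the rounds are not i.i.d., and Hoeffding's inequality no longer applies. The fix is to notice that a matching-rate estimate is unnecessary; a one-sided threshold test suffices, and by construction the $\gamma$-branch Azuma level $p_{\mathrm{LOCC}}+\delta$ lies strictly below the $\nu$-branch guaranteed rate $r$. I would also double-check two routine points: that the protocol remains well-defined when executed on the ``wrong'' initial memory state (it is, being a fixed schedule of LOCC instructions independent of which state is actually present), and that a separable state non-orthogonal to the entangled $\nu^{A'B'}$ exists (it does---any product basis state carrying nonzero weight in $\nu^{A'B'}$).
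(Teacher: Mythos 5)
Your proposal is correct and follows essentially the same route as the paper: a contradiction argument in which the memory-assisted advantage is converted into a single-copy discrimination of the entangled memory state $\nu^{A'B'}$ from a non-orthogonal separable $\gamma^{A'B'}$ via a one-sided threshold test on $S_n/n$, with the $\nu$-branch handled directly by the definition of an achievable rate and the $\gamma$-branch by the same supermartingale/Azuma bound as in the catalytic Case~2. Your explicit remarks---that Hoeffding is unavailable because the rounds are not i.i.d., and that a one-sided test at level $p_{\mathrm{LOCC}}+\delta<r$ suffices---match exactly what the paper does.
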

\begin{proof}
	Our proof will proceed by contradiction, analogously to the proof of Proposition~\ref{prop:Catalytic}. We thus assume that there is an advantage provided by a quantum memory in distinguishing $\rho_{0}$ and $\rho_{1}$, i.e., 
    \begin{equation}
    R_{\mathrm{m}}(\rho_{0},\rho_{1})>P_{\mathrm{LOCC}}(\rho_{0},\rho_{1}).
    \end{equation}
    Let further $\mu^{A'B'}$ be an initial state of the quantum memory, and $\{\mathcal{M}_i\}$ a sequence of LOCC protocols achieving the rate $R_{\mathrm{m}}(\rho_{0},\rho_{1})$. We further define $r_\mathrm{m} = R_{\mathrm{m}}(\rho_{0},\rho_{1})$, and $\delta > 0$ such that $r_\mathrm{m}>p_\mathrm{LOCC}+\delta$.

    Analogously to the proof of Proposition~\ref{prop:Catalytic}, we will now show that such an advantage could be exploited by Alice and Bob to perfectly distinguish if the initial state of the quantum memory was $\mu^{A'B'}$, or whether it was a separable state $\gamma^{A'B'}\in\operatorname{SEP}$, where $\gamma^{A'B'}$ is non-orthogonal to $\mu^{A'B'}$.

	Assume now that the quantum memory is initialized either in the state $\mu^{A'B'}$, or in the state $\gamma^{A'B'}$, each with probability $1/2$. We will denote the initial state with $\eta_{1}^{A'B'}\in\{\gamma^{A'B'},\mu^{A'B'}\}$. Alice and Bob, then, repeat the following steps $n$ times to obtain the classical random variable $X_{j}$ which will be used to distinguish $\gamma^{A'B'}$ and $\mu^{A'B'}$. Also in this setting, $\eta_j^{A'B'}$ denotes the state of the system $A'B'$ during $j$-th iteration.
	\begin{enumerate}
			\item Alice and Bob choose $Z_{j}\in\{0,1\}$ uniformly at random.
			\item They set $X_{j}=\perp$.
			\item If $Z_j=0$, they prepare the system registers in the state $\rho^{AB}_{0}$ using LOCC, else they prepare the state $\rho^{AB}_{1}$.
			\item They apply the LOCC protocol $\mathcal{M}_{j}$ on $\rho^{AB}_{Z_{j}}\otimes\eta_{j}^{A'B'}$. 
			\item They obtain the result $Y_{j}$. If $Y_{j}=Z_{j}$, they set $X_{j}=1$, else $X_{j}=0$. 
			\item They update $j$ to $j+1$.
	\end{enumerate}
	In the above procedure, each of the LOCC protocols $\mathcal{M}_{j}$ can depend on the outcomes of the previous rounds since we are considering a procedure involving a quantum memory.
    
    After $n$ rounds, Alice and Bob compute $S_{n}=\sum_{i=1}^{n}X_{j}$. If 
\begin{equation}
    \frac{S_n}{n}-p_\mathrm{LOCC}\geq \delta,
\end{equation}
then Alice and Bob guess that the initial state of the memory register was the entangled state $\mu^{A'B'}$, else they guess that initial state of the memory register was the separable state $\gamma^{A'B'}$. 

    We will now show that this procedure can be used to perfectly detect whether the quantum memory was initially in the state $\mu^{A'B'}$ or in a separable state $\gamma^{A'B'}$, which will lead to the desired contradiction. Analogously to the catalytic setting, we will consider two cases, namely $\eta_{1}^{A'B'}=\mu^{A'B'}$ (Case 1) and $\eta_{1}^{A'B'}=\gamma^{A'B'}$ (Case 2).
    
	\textbf{Case 1:} If the initial state is $\eta_{1}^{A'B'}=\mu^{A'B'}$, by assumption, for every \(\varepsilon > 0\) and \(m > 0\), there exists some \(n \ge m\)  such that 
		\begin{equation}
			\operatorname{Pr}\!\left( S_n \ge nr_\mathrm{m} \right) \ge 1 - \varepsilon.
		\end{equation}
		Recalling that $r_\mathrm{m}> p_\mathrm{LOCC}+\delta$, we get
		\begin{equation}
			\operatorname{Pr}\!\left( S_n - np_\mathrm{LOCC} \ge n\delta  \right) \ge 1 - \varepsilon.
		\end{equation}
		Hence, the probability that Alice and Bob make a correct guess in this setting is bounded as 
        \begin{equation}
        P_{\mathrm{corr}}(\mu)\geq1-\varepsilon.
        \end{equation}
    This concludes the analysis of Case 1.
		
    \textbf{Case 2:} If the initial state is $\eta_{1}^{A'B'}=\gamma^{A'B'}\in \mathrm{SEP}$, the analysis follows the same lines of reasoning as Case 2 for the catalytic setting. Also in this setting, we arrive at the inequality~(\ref{eq:Azuma}). Hence, the probability that Alice and Bob make a correct guess in this setting is bounded as 
    \begin{equation}
P_{\mathrm{corr}}(\gamma)=\mathrm{Pr}\left(\frac{S_{n}}{n}-p_{\mathrm{LOCC}}<\delta\right)\geq1-\exp\left(\frac{-n\delta^{2}}{2}\right).
\end{equation}
	This concludes the analysis of Case 2.
    
	From the analysis above, we get that the protocol succeeds in distinguishing $\gamma^{A'B'}$ and $\mu^{A'B'}$ with probability 
    \begin{align}
        P_{\mathrm{corr}} & =\frac{1}{2}\left[P_{\mathrm{corr}}(\mu)+P_{\mathrm{corr}}(\gamma)\right]\\
        & \geq\frac{1}{2}\left(1-\varepsilon\right)+\frac{1}{2}\left(1-\exp\left(-n\delta^{2}/2\right)\right). \nonumber
    \end{align}
	Recall that we can choose an arbitrarily small $\varepsilon > 0$ and an arbitrarily large $n$. Since we assumed that $\mu^{A'B'}$ and $\gamma^{A'B'}$ are nonorthogonal, we see that for sufficiently large $n$ and sufficiently small $\varepsilon > 0$ we have
    \begin{equation}
    P_{\mathrm{corr}}>P_{\mathrm{opt}}(\mu,\gamma).
    \end{equation}
    
	Hence, it would appear that Alice and Bob can distinguish two non-orthogonal states $\gamma^{A'B'}$ and $\mu^{A'B'}$ using the above-mentioned protocol with probability greater than the maximum of $P_\mathrm{opt}$. 
	This is the desired contradiction, and the proof is complete.
\end{proof}

\subsection*{Proof of Theorem \ref{thm:2}}
We will now provide a construction for two states $\rho_0^{AB}$ and $\rho_1^{AB}$ fulfilling Eqs.~(\ref{eq:Theorem2-1}) and (\ref{eq:Theorem2-2}). 

In this construction, each local system consists of two subsystems, that is, \(A = A_1A_2\) and \(B = B_1B_2\). 
For some $\varepsilon>0$ consider two states \(\sigma_0^{A_1B_1}\) and \(\sigma_1^{A_1B_1}\) that satisfy
\begin{align}
P_{\mathrm{opt}}(\sigma_{0},\sigma_{1}) & =1,\\
P_{\mathrm{LOCC}}(\sigma_{0},\sigma_{1}) & \le\frac{1}{2}+\varepsilon. \label{eq:Sigma01}
\end{align}
It is known that such states exist for any $\varepsilon > 0$~\cite{PhysRevLett.89.097905,Aubrun_2015,mele2025optimisingquantumdatahiding}. 

For some $\varepsilon' > 0$, let \(\ket{\psi}^{A_2B_2}\) be an entangled state satisfying
\begin{align}
\left\Vert \ket{\psi}\!\bra{\psi}^{A_{2}B_{2}} - \ket{00}\!\bra{00}^{A_{2}B_{2}} \right\Vert_1 &< \varepsilon', \label{eq:AlmostProduct}\\
S(\psi^{A_2}) &> \log_2 d_{A_1} \label{eq:EntanglementEntropy}
\end{align}
with von Neumann entropy $S(\rho)=-\mathrm{Tr}(\rho\log_{2}\rho)$. An example for a state with these properties can be given as 
\begin{equation}
\ket{\psi}=\sqrt{\lambda}\ket{00}+\sqrt{\frac{1-\lambda}{d_{A_{2}}-1}}\sum_{i=1}^{d_{A_{2}}-1}\ket{ii}
\end{equation}
with $\lambda \in (0,1)$. Noting that $|\!\braket{00|\psi}\!|^{2}=\lambda$ and using the inequality 
\begin{equation}
\left\Vert \rho-\sigma\right\Vert _{1}\leq2\sqrt{1-F(\rho,\sigma)}
\end{equation}
with fidelity $F(\rho,\sigma)=\left(\mathrm{Tr}\sqrt{\sqrt{\rho}\sigma\sqrt{\rho}}\right)^{2}$ it immediately follows that 
\begin{equation}
\left\Vert \ket{\psi}\!\bra{\psi}-\ket{00}\!\bra{00}\right\Vert _{1}\leq2\sqrt{1-\lambda}.
\end{equation}
It follows that Eq.~(\ref{eq:AlmostProduct}) is fulfilled whenever $\lambda$ fulfills 
\begin{equation}
\lambda>1-\frac{(\varepsilon')^{2}}{4}.
\end{equation}
We further have 
\begin{equation}
S(\psi^{A_{2}})=-\lambda\log_{2}\lambda-(1-\lambda)\log_{2}\frac{1-\lambda}{d_{A_{2}}-1}.
\end{equation}
It is clear that for any value of $\lambda \in (0,1)$ we can fulfill Eq.~(\ref{eq:EntanglementEntropy}) by choosing large enough $d_{A_2}$.

With these ingredients, we define the states
\begin{align}
\rho_{0}^{AB} &= \sigma_{0}^{A_{1}B_{1}} \otimes \ket{\psi}\!\bra{\psi}^{A_{2}B_{2}},  \\
\rho_{1}^{AB} &= \sigma_{1}^{A_{1}B_{1}} \otimes \ket{\psi}\!\bra{\psi}^{A_{2}B_{2}}.
\end{align}
As we will see in the following, Eq.~(\ref{eq:AlmostProduct}) implies that the states \(\rho_0\) and \(\rho_1\) satisfy
\begin{equation} \label{eq:PloccBound}
P_{\mathrm{LOCC}}(\rho_{0}, \rho_{1}) \le \varepsilon + \frac{1+\varepsilon'}{2},
\end{equation}
demonstrating that they remain almost indistinguishable under LOCC.

To prove this, note that Eq.~(\ref{eq:Sigma01}) is equivalent to 
\begin{equation}
\bigl\Vert\sigma_{0}-\sigma_{1}\bigr\Vert_{\mathrm{LOCC}}\leq4\varepsilon.\label{eq:Sigma12}
\end{equation}
It is clear that $\rho_{0}^{AB}$
and $\rho_{1}^{AB}$ are orthogonal whenever this is true for $\sigma_{0}$
and $\sigma_{1}$. We will now analyze the LOCC norm of $\rho_{0}-\rho_{1}$.
In the following, $\mathcal{M}$ denotes an optimal LOCC discrimination
protocol for the states $\rho_{0}$ and $\rho_{1}$, i.e., 
\begin{align}
\left\Vert \rho_{0}-\rho_{1}\right\Vert _{\mathrm{LOCC}} & =\left\Vert \mathcal{M}\left[\rho_{0}-\rho_{1}\right]\right\Vert _{1}\label{eq:OptimalLOCC}\\
 & =\left\Vert \mathcal{M}\left[\sigma_{0}^{A_{1}B_{1}}\otimes\psi^{A_{2}B_{2}}\right]-\mathcal{M}\left[\sigma_{1}^{A_{1}B_{1}}\otimes\psi^{A_{2}B_{2}}\right]\right\Vert _{1}.\nonumber 
\end{align}
Using triangle inequality for the trace norm we further find 
\begin{align}
 & \left\Vert \mathcal{M}\left[\sigma_{0}^{A_{1}B_{1}}\otimes\psi^{A_{2}B_{2}}\right]-\mathcal{M}\left[\sigma_{1}^{A_{1}B_{1}}\otimes\psi^{A_{2}B_{2}}\right]\right\Vert _{1}\label{eq:Triangle-1}\\
 & \leq\left\Vert \mathcal{M}\left[\sigma_{0}^{A_{1}B_{1}}\otimes\psi^{A_{2}B_{2}}\right]-\mathcal{M}\left[\sigma_{1}^{A_{1}B_{1}}\otimes\ket{00}\!\bra{00}^{A_{2}B_{2}}\right]\right\Vert _{1}\nonumber \\
 & +\left\Vert \mathcal{M}\left[\sigma_{1}^{A_{1}B_{1}}\otimes\ket{00}\!\bra{00}^{A_{2}B_{2}}\right]-\mathcal{M}\left[\sigma_{1}^{A_{1}B_{1}}\otimes\psi^{A_{2}B_{2}}\right]\right\Vert _{1},\nonumber 
\end{align}
and similarly 
\begin{align}
 & \left\Vert \mathcal{M}\left[\sigma_{0}^{A_{1}B_{1}}\otimes\psi^{A_{2}B_{2}}\right]-\mathcal{M}\left[\sigma_{1}^{A_{1}B_{1}}\otimes\ket{00}\!\bra{00}^{A_{2}B_{2}}\right]\right\Vert _{1}\label{eq:Triangle-2}\\
 & \leq\left\Vert \mathcal{M}\left[\sigma_{0}^{A_{1}B_{1}}\otimes\ket{00}\!\bra{00}^{A_{2}B_{2}}\right]-\mathcal{M}\left[\sigma_{1}^{A_{1}B_{1}}\otimes\ket{00}\!\bra{00}^{A_{2}B_{2}}\right]\right\Vert _{1}\nonumber \\
 & +\left\Vert \mathcal{M}\left[\sigma_{0}^{A_{1}B_{1}}\otimes\ket{00}\!\bra{00}^{A_{2}B_{2}}\right]-\mathcal{M}\left[\sigma_{0}^{A_{1}B_{1}}\otimes\psi^{A_{2}B_{2}}\right]\right\Vert _{1}.\nonumber 
\end{align}
Using Eq.~(\ref{eq:Triangle-2}) in Eq.~(\ref{eq:Triangle-1}) we
find 
\begin{align}
 & \left\Vert \mathcal{M}\left[\sigma_{0}^{A_{1}B_{1}}\otimes\psi^{A_{2}B_{2}}\right]-\mathcal{M}\left[\sigma_{1}^{A_{1}B_{1}}\otimes\psi^{A_{2}B_{2}}\right]\right\Vert _{1}\\
 & \leq\left\Vert \mathcal{M}\left[\sigma_{0}^{A_{1}B_{1}}\otimes\ket{00}\!\bra{00}^{A_{2}B_{2}}\right]-\mathcal{M}\left[\sigma_{1}^{A_{1}B_{1}}\otimes\ket{00}\!\bra{00}^{A_{2}B_{2}}\right]\right\Vert _{1}\nonumber \\
 & +\left\Vert \mathcal{M}\left[\sigma_{0}^{A_{1}B_{1}}\otimes\ket{00}\!\bra{00}^{A_{2}B_{2}}\right]-\mathcal{M}\left[\sigma_{0}^{A_{1}B_{1}}\otimes\psi^{A_{2}B_{2}}\right]\right\Vert _{1}\nonumber \\
 & +\left\Vert \mathcal{M}\left[\sigma_{1}^{A_{1}B_{1}}\otimes\ket{00}\!\bra{00}^{A_{2}B_{2}}\right]-\mathcal{M}\left[\sigma_{1}^{A_{1}B_{1}}\otimes\psi^{A_{2}B_{2}}\right]\right\Vert _{1}.\nonumber 
\end{align}
Using this in Eq.~(\ref{eq:OptimalLOCC}) we further obtain
\begin{align}
 & \left\Vert \rho_{0}-\rho_{1}\right\Vert _{\mathrm{LOCC}}\label{eq:LOCCnormBound-2}\\
 & \leq\left\Vert \mathcal{M}\left[\sigma_{0}^{A_{1}B_{1}}\otimes\ket{00}\!\bra{00}^{A_{2}B_{2}}\right]-\mathcal{M}\left[\sigma_{1}^{A_{1}B_{1}}\otimes\ket{00}\!\bra{00}^{A_{2}B_{2}}\right]\right\Vert _{1}\nonumber \\
 & +\left\Vert \mathcal{M}\left[\sigma_{0}^{A_{1}B_{1}}\otimes\ket{00}\!\bra{00}^{A_{2}B_{2}}\right]-\mathcal{M}\left[\sigma_{0}^{A_{1}B_{1}}\otimes\psi^{A_{2}B_{2}}\right]\right\Vert _{1}\nonumber \\
 & +\left\Vert \mathcal{M}\left[\sigma_{1}^{A_{1}B_{1}}\otimes\ket{00}\!\bra{00}^{A_{2}B_{2}}\right]-\mathcal{M}\left[\sigma_{1}^{A_{1}B_{1}}\otimes\psi^{A_{2}B_{2}}\right]\right\Vert _{1}.\nonumber 
\end{align}

In the next step, consider the map 
\begin{equation}
\widetilde{\mathcal{M}}\left[\mu^{A_{1}B_{1}}\right]=\mathcal{M}\left[\mu^{A_{1}B_{1}}\otimes\ket{00}\!\bra{00}^{A_{2}B_{2}}\right].
\end{equation}
Since the attachment of a product state can be implemented via LOCC, it is straightforward to see that
\begin{align}
\left\Vert \widetilde{\mathcal{M}}\left[\sigma_{0}^{A_{1}B_{1}}-\sigma_{1}^{A_{1}B_{1}}\right]\right\Vert _{1} & \leq\left\Vert \sigma_{0}^{A_{1}B_{1}}-\sigma_{1}^{A_{1}B_{1}}\right\Vert _{\mathrm{LOCC}}.
\end{align}
Moreover, using the data processing inequality for trace norm we see
that 
\begin{align}
 & \left\Vert \mathcal{M}\left[\gamma^{A_{1}B_{1}}\otimes\ket{00}\!\bra{00}^{A_{2}B_{2}}\right]-\mathcal{M}\left[\gamma^{A_{1}B_{1}}\otimes\psi^{A_{2}B_{2}}\right]\right\Vert _{1}\nonumber \\
 & \leq\left\Vert \ket{00}\!\bra{00}^{A_{2}B_{2}}-\psi^{A_{2}B_{2}}\right\Vert _{1}
\end{align}
for any state $\gamma^{AB}$. Using these results in Eq.~(\ref{eq:LOCCnormBound-2})
we find 
\begin{align}
\left\Vert \rho_{0}-\rho_{1}\right\Vert _{\mathrm{LOCC}} & \leq\left\Vert \sigma_{0}^{A_{1}B_{1}}-\sigma_{1}^{A_{1}B_{1}}\right\Vert _{\mathrm{LOCC}}\\
 & +2\left\Vert \ket{00}\!\bra{00}^{A_{2}B_{2}}-\psi^{A_{2}B_{2}}\right\Vert _{1}.\nonumber 
\end{align}
In the final step we use Eqs.~(\ref{eq:AlmostProduct}) and~(\ref{eq:Sigma12}), leading to 
\begin{align}
\left\Vert \rho_{0}-\rho_{1}\right\Vert _{\mathrm{LOCC}} & \leq4\varepsilon+2\varepsilon'.
\end{align}
Using this bound in Eq.~(\ref{eq:Plocc}) we arrive at the claimed inequality~(\ref{eq:PloccBound}). In particular, by choosing small enough $\varepsilon$ and $\varepsilon'$
we can achieve $P_{\mathrm{LOCC}}(\rho_{0},\rho_{1})<1/2+\varepsilon''$ for any $\varepsilon'' > 0$.

We will now show how Alice and Bob can distinguish $\rho_{0}$ and
$\rho_{1}$ by using a quantum memory. In the following, let $P_{n}$
be the probability to convert $\ket{\psi}^{\otimes n}$ into the state
\begin{equation}
\ket{\phi_{L_{n}}}=\frac{1}{\sqrt{L_{n}}}\sum_{i=0}^{L_{n}-1}\ket{ii}.
\end{equation}
Recall that for any $\varepsilon,\delta>0$ there exists some $r>0$
such that the following inequalities hold for all large enough $n$ \cite{Hayashi_2003}:
\begin{align}
P_{n} & \geq1-2^{-n(r-\delta)},\\
\log_{2}L_{n} & \geq n\left[S(\psi^{A_{2}})-\varepsilon\right].
\end{align}
Recalling that $S(\psi^{A_{2}})> \log_2 d_{A_{1}}$, it follows that for any
$\tilde{\varepsilon}$ there exists some $n$ such that the conversion
$\ket{\psi}^{\otimes n}\rightarrow\ket{\phi_{d_{A_{1}}}}^{\otimes n}$
is possible with probability
\begin{equation}
P_{n}\geq1-\tilde{\varepsilon}.\label{eq:Pn}
\end{equation}
In the following, we assume that $n$ and $\tilde{\varepsilon}$ are
chosen such that Eq.~(\ref{eq:Pn}) holds.

Assume now that Alice and Bob have access to a quantum memory $A'B'$
of dimension $d_{A'}=d_{B'}=d_{A_{2}}^{n}$. The initial state of
the quantum memory will be $\ket{\phi_{d_{A_{1}}^{n}}}$, which is
locally equivalent to $n$ copies of the state $\ket{\phi_{d_{A_{1}}}}$.
Note that Alice and Bob can use each of the copies of $\ket{\phi_{d_{A_{1}}}}$
to teleport Alice's part of $\sigma_{i}^{A_{1}B_{1}}$ to Bob, who
can then locally perfectly distinguish the states $\sigma_{0}$ and
$\sigma_{1}$. In each round Alice and Bob consume one copy of $\ket{\phi_{d_{A_{1}}}}$
and store one copy of $\ket{\psi}$ in the quantum memory. After $m$
rounds (with $m<n$) the quantum memory is in the state $\ket{\phi_{d_{A_{1}}}}^{\otimes n-m}\otimes\ket{\psi}^{\otimes m}$.
After $n$ rounds all copies of $\ket{\phi_{d_{A_{1}}}}$ have been
consumed, and the quantum memory is in the state $\ket{\psi}^{\otimes n}$.
As discussed above, by using LOCC Alice and Bob can convert $\ket{\psi}^{\otimes n}$
into $\ket{\phi_{d_{A_{1}}}}^{\otimes n}$ with probability $P_{n}\geq1-\tilde{\varepsilon}$.
If the conversion procedure fails, Alice and Bob will need another
$n$ rounds to establish a new instance of $\ket{\psi}^{\otimes n}$.
In this period they can perform standard LOCC state discrimination, which gives a success
rate of at least $1/2$. 

Choose some integer $k$ and assume now that Alice and Bob perform
$kn$ rounds of the local state discrimination procedure. We can think of
the procedure consisting of $k$ independent blocks, each having length
$n$. Within each of the blocks, Alice and Bob can perform the same
discrimination procedure independently, as described above. From the
above discussion it follows that the probability that Alice and Bob
can perfectly distinguish all states within each of the block is at
least $1-\tilde{\varepsilon}$. Let us now define 
\begin{equation}
r=1-\tilde{\varepsilon}-\delta
\end{equation}
with some $\delta>0$. Since each of the blocks is treated independently,
it is clear from Hoeffding's inequality~\cite{H63} that for any $\tilde{\varepsilon},\delta>0$ we can choose
some $n$ such that the following inequality holds for any $\varepsilon>0$
and all $k$ large enough:
\begin{equation}
P(S_{kn}\geq rkn)>1-\varepsilon.
\end{equation}
This completes the proof.

\section*{Acknowledgements}
We thank Saronath Halder for discussion. This work was supported by the National Science Centre Poland (Grant No. 2022/46/E/ST2/00115 and 2024/55/B/ST2/01590).

\bibliography{ref}

\end{document}